\newtheorem{thrm}{Theorem}
\newtheorem{lmn}{Lemma}
\newtheorem{prop}{Proposition}
\DeclareMathOperator{\diag}{diag}
\renewcommand{\vec}[1]{\mathbf{#1}}
\begin{document}

\title{Tripartite separability conditions exponentially violated by Gaussian states}
\author{E. Shchukin}
\email{evgeny.shchukin@gmail.com}
\author{P. van Loock}
\email{loock@uni-mainz.de}
\affiliation{Johannes-Gutenberg University of Mainz, Institute of Physics, Staudingerweg 7, 55128
Mainz}

\begin{abstract}
Starting with a set of conditions for bipartite separability of arbitrary quantum states in any 
dimension and expressed in terms of arbitrary operators whose commutator is a $c$-number, we derive 
a hierarchy of conditions for tripartite separability of continuous-variable three-mode quantum 
states. These conditions have the form of inequalities for higher-order moments of linear 
combinations of the mode operators. They enable one to distinguish between all possible kinds of 
tripartite separability, while the strongest violation of these inequalities is a sufficient 
condition for genuine tripartite entanglement. We construct Gaussian states for which the violation 
of our conditions grows exponentially with the order of the moments of the mode operators. By going 
beyond second moments, our conditions are expected to be useful as well for the detection of 
tripartite entanglement of non-Gaussian states. We also demonstrate that our conditions can be 
easily implemented experimentally.
\end{abstract}

\pacs{03.67.Mn, 03.65.Ud, 42.50.Dv}

\keywords{continuous-variables; genuine entanglement; multi-partite entanglement conditions;
exponential violation}

\maketitle

\section{Introduction} 

One of the first algorithmic solutions of the problem of characterizing entangled states was given 
in \cite{PhysRevA.69.022308}. The algorithm presented in that work is based on numerical 
optimization and thus inherits all limitations of the numerics it is based on. Hence there is 
clearly a desire for more robust and simple analytical entanglement criteria, which may be adapted 
to specific applications. The most well-known, analytical entanglement criteria are based on a 
certain positive, but not completely positive map -- the partial transposition (PT) of density 
matrices \cite{PhysRevLett.77.1413}. For discrete-variable, bipartite two-qubit and qubit-qutrit 
states \cite{Horodecki19961}, as well as for continuous-variable, bipartite Gaussian states split 
into one versus arbitrarily many modes \cite{PhysRevLett.86.3658, PhysRevLett.84.2726}, negative PT 
(NPT) is necessary and sufficient for entanglement. While NPT is generally only a sufficient 
entanglement condition, as such, it is still an unambiguous certifier for entanglement.

In the multipartite case, of course, NPT can also serve as a general indicator of inseparability, 
provided that the richer structure of systems composed of more than two subsystems is taken into 
account. More specifically, a multipartite state may have NPT with respect to only one particular 
splitting and have positive PT (PPT) otherwise. While such a scenario is ambiguous for more than two 
qubits, it would unambiguously identify a non-fully, only partially inseparable Gaussian state of 
three harmonic oscillators (three optical modes). In fact, arbitrary tripartite, three-mode mixed 
Gaussian states can be completely classified by five classes ranging from fully inseparable to fully 
separable states \cite{PhysRevA.64.052303}. The fully inseparable class may then be considered the 
class of genuinely tripartite entangled states. For discrete-variable states such as three-qubit 
states as well as for three modes in a general continuous-variable, mixed state, the situation is 
more complex, as there are fully inseparable states which can be constructed as convex combinations 
of different partially separable (biseparable) states. The strongest notion of tripartite 
entanglement should exclude such biseparable mixtures. The term genuine multipartite entanglement 
has been used in the literature on continuous-variable entanglement both in the weak and in the 
strong sense: in Ref.~\cite{PhysRevA.67.052315}, conditions for full inseparability were presented 
in terms of linear combinations of the position and momentum operators of arbitrarily many modes 
(applied experimentally for the first time in Ref.~\cite{PhysRevLett.91.080404}), extending the 
previously given bipartite criteria by Duan {\it et al.} \cite{PhysRevLett.84.2722} to the 
multipartite case; in the combined theoretical and experimental work of Ref.~\cite{NatPhys.9.19}, a 
stronger version of the tripartite inequalities of Ref.~\cite{PhysRevA.67.052315} was derived and 
applied; and Hyllus and Eisert \cite{NewJPhys.8.51} gave a unified treatment including both weak and 
strong conditions for multipartite entanglement based on a complete knowledge of the covariance 
matrix of the continuous-variable state in question. All these criteria are based on the second 
moments of the mode operators, and most of the criteria are also, at least implicity, based on PT 
(see also Ref.~\cite{PhysRevLett.111.110503}).

Here, starting from a pair of arbitrary operators $\hat{X}$ and $\hat{Y}$ where $[\hat{X}, \hat{Y}]$ 
is a nonzero $c$-number, we develop a hierarchy of conditions based on PT which all tripartite 
biseparable states \cite{PhysRevLett.87.040401} must satisfy. Each condition is an inequality for 
higher-order moments of the state under consideration. If a tripartite state violates an inequality 
of our hierarchy then it is genuinely entangled in the strong sense. Our hierarchy has the 
remarkable property that the possible violations of our conditions, scaled to the bound for 
biseparable states, grows exponentially with the order of the moments. We show that there exist pure 
three-mode Gaussian states which then give an exponential violation with increasing order. Hence 
such Gaussian states, easily producible with squeezed light and beam splitters 
\cite{PhysRevA.71.055801, PhysRevLett.73.58}, appear to exhibit the strongest possible form of 
tripartite entanglement. Moreover, since our criteria are based on higher-order moments, they can 
also be expected to allow for more sensitive tests of non-Gaussian tripartite entanglement than 
those based only on second moments. Among other approaches to tripartite entanglement we refer to 
Refs.~\cite{PhysRevLett.112.010401, EPJSP.160.319}. Higher-order moments were already used in 
Ref.~\cite{PhysRevA.74.030302} to test whether one group of modes can be separated from the rest. 
Our conditions here are more powerful in the sense that they are valid for arbitrary mixtures of 
differently separable states and thus the violation of these conditions is sufficient for genuine 
tripartite entanglement. The multipartite higher-order moments can be measured experimentally, see 
Refs.~\cite{PhysRevLett.96.200403, PhysRevA.85.063835}.

\section{Bipartite case} 

We start with the following simple lemma, which is usually used for the case of annihilation and 
creation operators to express antinormally ordered moments in terms of normally ordered ones.
\begin{lmn}
For any two operators $\hat{X}$ and $\hat{Y}$ whose commutator $[\hat{X}, \hat{Y}] = c$ is a 
$c$-number, we have
\begin{equation}\label{eq:XY}
    \hat{X}^n \hat{Y}^m = \sum^{\min(n,m)}_{k=0} k! \binom{n}{k} \binom{m}{k} c^k
    \hat{Y}^{m-k} \hat{X}^{n-k}.
\end{equation}
\end{lmn}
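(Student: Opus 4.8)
The plan is to prove \eqref{eq:XY} by induction on $n$, with the case $n=1$ serving as the engine that drives the whole argument. The base relation I need is
\begin{equation}
\hat{X}\hat{Y}^m = \hat{Y}^m\hat{X} + mc\,\hat{Y}^{m-1},
\end{equation}
which is exactly \eqref{eq:XY} at $n=1$. I would establish it by a short secondary induction on $m$: the derivation (Leibniz) property of the commutator gives $[\hat{X},\hat{Y}^{m}] = [\hat{X},\hat{Y}]\hat{Y}^{m-1} + \hat{Y}[\hat{X},\hat{Y}^{m-1}]$, and since $[\hat{X},\hat{Y}]=c$ is a $c$-number that commutes with $\hat{Y}$, the inductive hypothesis collapses this to $c\hat{Y}^{m-1} + (m-1)c\hat{Y}^{m-1} = mc\hat{Y}^{m-1}$.

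With the base relation in hand, I would assume \eqref{eq:XY} holds for a fixed $n$ and every $m$, then multiply on the left by $\hat{X}$. Each summand $\hat{Y}^{m-k}\hat{X}^{n-k}$ becomes $\hat{X}\hat{Y}^{m-k}\hat{X}^{n-k}$, and I would use the base relation to push the single $\hat{X}$ rightward past $\hat{Y}^{m-k}$, splitting it into
\begin{equation}
\hat{X}\hat{Y}^{m-k}\hat{X}^{n-k} = \hat{Y}^{m-k}\hat{X}^{n-k+1} + (m-k)c\,\hat{Y}^{m-k-1}\hat{X}^{n-k}.
\end{equation}
Summing over $k$, the first family of terms reproduces the level-$(n+1)$ structure with the shift index unchanged, while the second family feeds into the next shift index, so collecting the coefficient of $\hat{Y}^{m-j}\hat{X}^{n+1-j}$ yields a recurrence that I must verify.

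The main obstacle is purely combinatorial: writing $A^{(n,m)}_k = k!\binom{n}{k}\binom{m}{k}c^k$ for the coefficients in \eqref{eq:XY}, the reorganization above demands
\begin{equation}
A^{(n+1,m)}_j = A^{(n,m)}_j + (m-j+1)c\,A^{(n,m)}_{j-1}.
\end{equation}
I would reduce this to two elementary facts. First, the absorption identity $(m-j+1)\binom{m}{j-1} = j\binom{m}{j}$ converts the lower factorial $(j-1)!$ together with the prefactor $(m-j+1)$ into $j!$ and aligns the two $\hat{Y}$-binomials at index $j$. Second, Pascal's rule $\binom{n}{j}+\binom{n}{j-1}=\binom{n+1}{j}$ upgrades the $\hat{X}$-binomial from $n$ to $n+1$. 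Once both are in place the recurrence closes and the induction goes through; the boundary cases $j=0$ and $j=\min(n+1,m)$ require only the remark that out-of-range binomials vanish.

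As an independent cross-check, or as a fully non-inductive alternative, I note that the centrality of $[\hat{X},\hat{Y}]=c$ gives the Heisenberg-algebra relation $e^{s\hat{X}}e^{t\hat{Y}} = e^{stc}\,e^{t\hat{Y}}e^{s\hat{X}}$ via Baker--Campbell--Hausdorff. Expanding both sides in powers of $s$ and $t$ and matching the coefficient of $s^n t^m$ reproduces \eqref{eq:XY} at once, with the factor $k!\binom{n}{k}\binom{m}{k}$ emerging as $n!\,m!/[k!(n-k)!(m-k)!]$.
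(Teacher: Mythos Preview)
Your proof is correct. Both the inductive argument and the Baker--Campbell--Hausdorff alternative are valid, and your verification of the recurrence $A^{(n+1,m)}_j = A^{(n,m)}_j + (m-j+1)c\,A^{(n,m)}_{j-1}$ via the absorption identity and Pascal's rule is clean.

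Note, however, that the paper does not actually prove this lemma: it is stated without proof as a standard reordering identity ``usually used for the case of annihilation and creation operators to express antinormally ordered moments in terms of normally ordered ones.'' So there is no paper proof to compare against; your proposal simply fills in what the authors left as folklore. Incidentally, the BCH route you sketch at the end is the approach the paper itself uses later in Appendix~B when disentangling $e^{t\hat{Z}}$, so that alternative is very much in the spirit of the surrounding material.
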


With this lemma we can now prove one of our main results, on which our condition for genuine 
tripartite entanglement will be based.
\begin{thrm}\label{thrm:AB}
Let operators $\hat{A}$ and $\hat{B}$ act on different degrees of freedom, with the commutator 
$[\hat{B}, \hat{B}^\dagger] = c > 0$ being a positive $c$-number. Then the inequality
\begin{equation}\label{eq:ABn}
    f_n(\hat{A}, \hat{B}) \equiv \langle (\hat{A}^\dagger + \hat{B})^n
    (\hat{A} + \hat{B}^\dagger)^n \rangle \geqslant c^n n!
\end{equation}
is valid for all bipartite separable quantum states.
\end{thrm}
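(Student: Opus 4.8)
The plan is to prove the statement via the positivity of the partial transpose (PPT), which every separable state satisfies, together with an explicit sum-of-squares decomposition of the partially transposed observable. First I would use that $\hat{A}$ and $\hat{B}$ act on different degrees of freedom, so they commute, and I would abbreviate $\hat{C}=\hat{A}^\dagger+\hat{B}$, whence $\hat{C}^\dagger=\hat{A}+\hat{B}^\dagger$ and $f_n(\hat{A},\hat{B})=\langle\hat{C}^n(\hat{C}^\dagger)^n\rangle$. Since a separable $\rho$ obeys $\rho^{T_B}\geqslant 0$ and the trace identity $\tr(\rho\,\hat{M})=\tr(\rho^{T_B}\,\hat{M}^{T_B})$ holds for every $\hat{M}$, it suffices to establish the operator inequality $(\hat{C}^n(\hat{C}^\dagger)^n)^{T_B}\geqslant c^n n!\,\hat{1}$; taking the expectation value in the positive operator $\rho^{T_B}$ (with $\tr\rho^{T_B}=1$) then yields $f_n\geqslant c^n n!$ at once.

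The heart of the argument is to compute $(\hat{C}^n(\hat{C}^\dagger)^n)^{T_B}$ explicitly. Using the commutativity of $\hat{A}$ and $\hat{B}$, I would binomially expand $\hat{C}^n=\sum_j\binom{n}{j}(\hat{A}^\dagger)^j\hat{B}^{n-j}$ and $(\hat{C}^\dagger)^n=\sum_l\binom{n}{l}\hat{A}^l(\hat{B}^\dagger)^{n-l}$, collect the $A$- and $B$-factors separately, and then apply the partial transposition on $B$, which in the Fock basis exchanges $\hat{B}\leftrightarrow\hat{B}^\dagger$ and reverses the order of the $B$-factors while leaving the $A$-factors untouched. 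After the substitution $p=n-j,\ q=n-l$ this produces the double sum $\sum_{p,q}\binom{n}{p}\binom{n}{q}(\hat{A}^\dagger)^{n-p}\hat{A}^{n-q}\,\hat{B}^q(\hat{B}^\dagger)^p$. I would then normal-order each $\hat{B}^q(\hat{B}^\dagger)^p$ by the Lemma (with $\hat{X}=\hat{B}$, $\hat{Y}=\hat{B}^\dagger$), and, using once more that $A$- and $B$-operators commute to move $\hat{A}^{n-q}$ past $(\hat{B}^\dagger)^{p-k}$, reorganize the resulting triple sum into the manifestly factorized form $(\hat{C}^n(\hat{C}^\dagger)^n)^{T_B}=\sum_{k=0}^{n}k!\,c^k\,\hat{O}_k^\dagger\hat{O}_k$, where $\hat{O}_k=\sum_{q=k}^{n}\binom{n}{q}\binom{q}{k}\hat{A}^{n-q}\hat{B}^{q-k}$.

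With this decomposition the conclusion is immediate. Since $c>0$, every term $k!\,c^k\,\hat{O}_k^\dagger\hat{O}_k$ is positive semidefinite, and the single $k=n$ term collapses to $n!\,c^n\,\hat{1}$ because $\hat{O}_n=\hat{1}$. Hence $(\hat{C}^n(\hat{C}^\dagger)^n)^{T_B}\geqslant n!\,c^n\,\hat{1}$, and feeding this into $f_n=\tr(\rho^{T_B}(\hat{C}^n(\hat{C}^\dagger)^n)^{T_B})$ gives $f_n\geqslant c^n n!$ for every separable state.

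I expect the main obstacle to be the bookkeeping in the middle step: faithfully tracking how the partial transposition acts on the ordered products and then collapsing the multiply-indexed sum (after invoking the Lemma) into the positive factorized form $\sum_k k!\,c^k\,\hat{O}_k^\dagger\hat{O}_k$. The observation that makes this work is that the commutativity of the $A$- and $B$-operators is exploited twice --- once to expand the powers of $\hat{C}$, and once to slide the transposed $B$-factors past the $A$-factors so that each $k$-summand factorizes as $\hat{O}_k^\dagger\hat{O}_k$; everything else is routine algebra.
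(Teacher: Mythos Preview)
Your proposal is correct and is essentially the same argument as the paper's: both invoke PPT, apply the Lemma to normal-order the $B$-factors, and end up with the same sum-of-squares identity for the partially transposed observable, from which the $i=0$ (your $k=n$) term yields the bound $c^n n!$. In fact your operators $\hat{O}_k$ simplify to $\binom{n}{k}(\hat{A}+\hat{B})^{n-k}$, so your decomposition $\sum_k k!\,c^k\,\hat{O}_k^\dagger\hat{O}_k$ is exactly the paper's expression $\sum_i \frac{n!}{i!}\binom{n}{i}c^{n-i}(\hat{A}^\dagger+\hat{B}^\dagger)^i(\hat{A}+\hat{B})^i$ under the reindexing $i=n-k$; the only cosmetic difference is that you phrase it as an operator inequality for $(\hat{C}^n(\hat{C}^\dagger)^n)^{T_B}$ while the paper works with expectation values $\langle\cdot\rangle_{\mathrm{PT}}$.
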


If, in addition to the assumption of the theorem, the equality $[\hat{A}, \hat{A}^\dagger] = 
[\hat{B}, \hat{B}^\dagger] = c$ holds, then the operators $\hat{A}^\dagger + \hat{B}$ and $\hat{A} + 
\hat{B}^\dagger$ commute, since in this case we have $[\hat{A}^\dagger + \hat{B}, \hat{A} + 
\hat{B}^\dagger] = [\hat{A}^\dagger, \hat{A}] + [\hat{B}, \hat{B}^\dagger] = 0$, and 
Eq.~\eqref{eq:ABn} becomes $\langle \bigl((\hat{A}^\dagger + \hat{B})(\hat{A} + 
\hat{B}^\dagger)\bigr)^n \rangle \geqslant n! c^n$. In the bipartite two-mode case, with $\hat{A} = 
\hat{a}$ and $\hat{B} = -\hat{b}$ being the annihilation operators of the first and the second mode, 
respectively, we have 
\begin{equation}
    2(\hat{a}^\dagger - \hat{b})(\hat{a} - \hat{b}^\dagger) = (\hat{x}_a - 
\hat{x}_b)^2 + (\hat{p}_a + \hat{p}_b)^2 = \hat{O}_{\mathrm{EPR}}, 
\end{equation}
where $\hat{x} = (\hat{a}+\hat{a}^\dagger)/\sqrt{2}$, $\hat{p} = 
-i(\hat{a}-\hat{a}^\dagger)/\sqrt{2}$, and this inequality can be written as $\langle 
\hat{O}^n_{\mathrm{EPR}} \rangle \geqslant 2^n n!$, which has been derived in 
\cite{PhysRevLett.108.030503} using another approach (with different convention for position and 
momentum).

\begin{proof}
Expanding the left-hand side of Eq.~\eqref{eq:ABn} we get 
\begin{equation}
    f_n(\hat{A}, \hat{B}) = \sum^n_{k,l=0} 
\binom{n}{k} \binom{n}{l} \langle\hat{A}^{\dagger n-k} \hat{A}^{n-l} \hat{B}^k \hat{B}^{\dagger l} 
\rangle. 
\end{equation}
Using the relation $\langle \hat{A} \hat{B} \rangle_{\mathrm{PT}} = \langle \hat{A} 
\hat{B}^\dagger \rangle$, which is proven in Appendix A, we have
\begin{equation}
    f_n(\hat{A}, \hat{B}) = \sum^n_{k,l=0} \binom{n}{k} \binom{n}{l} 
\langle\hat{A}^{\dagger n-k} \hat{A}^{n-l} \hat{B}^l \hat{B}^{\dagger k} \rangle_{\mathrm{PT}}. 
\end{equation}
One can easily see that this sum can be simplified as
\begin{equation}\label{eq:AB}
    f_n(\hat{A}, \hat{B}) = \sum^n_{k=0} \binom{n}{k} \langle \hat{A}^{\dagger n-k}
    (\hat{A}+\hat{B})^n \hat{B}^{\dagger k} \rangle_{\mathrm{PT}}.
\end{equation}
Since $[\hat{A}+\hat{B}, \hat{B}^\dagger] = c$ is a $c$-number, we can use the relation
\eqref{eq:XY} to write the product $(\hat{A}+\hat{B})^n \hat{B}^{\dagger k}$ in the
``normally-ordered" form:
\begin{equation}
    (\hat{A}+\hat{B})^n \hat{B}^{\dagger k} = \sum^k_{j=0} j! \binom{n}{j}\binom{k}{j} c^j
    \hat{B}^{\dagger k-j} (\hat{A}+\hat{B})^{n-j}.
\end{equation}
Substituting this expansion into Eq.~\eqref{eq:AB}, after some simplifications we obtain
\begin{equation}\label{eq:ABAB}
\begin{split}
    f_n(\hat{A}, \hat{B})
    = \sum^n_{i=0} \frac{n!}{i!} \binom{n}{i} c^{n-i} \langle (\hat{A}^\dagger + \hat{B}^\dagger)^i
    (\hat{A} + \hat{B})^i \rangle_{\mathrm{PT}}. \nonumber
\end{split}
\end{equation}
For a separable state, the partially transposed state is positive definite, and then the terms on 
the right-hand side of this equality with $i > 0$ are non-negative and we finally get the desired 
inequality \eqref{eq:ABn} \footnote{There is an alternative way of proving this theorem that does 
not use PT explicitly. But nevertheless it turns out that the theorem is PT-related --- if a state 
is PPT then the inequality \eqref{eq:ABn} cannot be violated.}.
\end{proof}

\section{Tripartite case} 

Consider the tripartite case with the three parts labeled by the letters $a$, $b$, and $c$. A state 
$\hat{\varrho}$ is called $1|23$ partially separable if its first part, $a$, can be separated from 
the other two, $b$ and $c$, i.e., if it can be represented as $\hat{\varrho} = \sum_i p_i 
\hat{\varrho}_{a, i} \otimes \hat{\varrho}_{bc, i}$, where $\hat{\varrho}_{a, i}$ are density 
operators of the $a$-part and $\hat{\varrho}_{bc, i}$ are bipartite density operators of the $b$ and 
$c$ parts. To test for this kind of separability we introduce the quantity 
$\mathcal{A}^{(m)}_{1|23}$ defined as
\begin{equation}\label{eq:A}
    \mathcal{A}^{(m)}_{1|23} = \frac{1}{m!} \langle (\hat{a}^\dagger + \hat{b} + \hat{c})^m
    (\hat{a} + \hat{b}^\dagger + \hat{c}^\dagger)^m \rangle.
\end{equation}
This quantity can distinguish between $1|23$ separable states and all others according to the 
following proposition.
\begin{prop} The quantity $\mathcal{A}^{(m)}_{1|23}$ satisfies the inequalities
\begin{subnumcases}{\label{eq:A1-23} \mathcal{A}^{(m)}_{1|23} \geqslant}
    2^m & \text{for all $1|23$ separable states} \label{eq:A1-23-1} \\
    1 & \text{for all states} \label{eq:A1-23-2}
\end{subnumcases}
For the pure Gaussian state $|\psi\rangle$ with wave function $\psi(\vec{x}) = \pi^{-3/4} 
\exp\left(-\frac{1}{2}\vec{x}^{\mathrm{T}} A \vec{x}\right)$, where the matrix is given by $A = 
\left(\begin{smallmatrix} 3 & 2 & 2 \\ 2 & 2 & 1 \\ 2 & 1 & 2 \end{smallmatrix}\right)$, we have 
$\mathcal{A}^{(m)}_{1|23} = 1$ for all $m \geqslant 1$, so this state minimizes all the quantities 
$\mathcal{A}^{(m)}_{1|23}$, $m \geqslant 1$, simultaneously.
\end{prop}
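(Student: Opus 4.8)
The plan is to reduce each of the two inequalities to results already established in the excerpt and then to settle the equality case by a direct computation on the given wavefunction.

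For the bound \eqref{eq:A1-23-1} I would apply Theorem~\ref{thrm:AB} directly with $\hat{A} = \hat{a}$ and $\hat{B} = \hat{b} + \hat{c}$. These act on complementary degrees of freedom (mode $a$ versus modes $b,c$), and their commutator is $[\hat{B},\hat{B}^\dagger] = [\hat{b},\hat{b}^\dagger] + [\hat{c},\hat{c}^\dagger] = 2 = c > 0$, as required. Since a $1|23$ separable state is precisely a bipartite separable state across the $a|bc$ cut (the second party being the composite $bc$ system), the theorem yields $f_m(\hat{A},\hat{B}) = \langle(\hat{a}^\dagger + \hat{b} + \hat{c})^m(\hat{a}+\hat{b}^\dagger+\hat{c}^\dagger)^m\rangle \geqslant c^m m! = 2^m m!$, and dividing by $m!$ gives $\mathcal{A}^{(m)}_{1|23} \geqslant 2^m$.

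For the universal bound \eqref{eq:A1-23-2}, the key observation is that the operator $\hat{d} = \hat{a}^\dagger + \hat{b} + \hat{c}$ satisfies $[\hat{d},\hat{d}^\dagger] = [\hat{a}^\dagger,\hat{a}] + [\hat{b},\hat{b}^\dagger] + [\hat{c},\hat{c}^\dagger] = -1 + 1 + 1 = 1$, so it obeys canonical annihilation-operator commutation relations, and $\mathcal{A}^{(m)}_{1|23} = \frac{1}{m!}\langle \hat{d}^m (\hat{d}^\dagger)^m\rangle$ is an antinormally ordered moment. I would then apply the Lemma \eqref{eq:XY} with $\hat{X} = \hat{d}$, $\hat{Y} = \hat{d}^\dagger$, $c = 1$, to write $\hat{d}^m(\hat{d}^\dagger)^m = \sum_{k=0}^m k!\binom{m}{k}^2 (\hat{d}^\dagger)^{m-k}\hat{d}^{m-k}$. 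The $k = m$ term equals $m!$ times the identity and so contributes exactly $m!$ to the expectation, while every term with $k < m$ is the expectation of a positive operator $(\hat{d}^\dagger)^{m-k}\hat{d}^{m-k} = (\hat{d}^{m-k})^\dagger \hat{d}^{m-k} \geqslant 0$ and is therefore non-negative. Dividing by $m!$ gives $\mathcal{A}^{(m)}_{1|23}\geqslant 1$ for every state.

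The equality case for the Gaussian state is where the real work lies. From the decomposition above, $\mathcal{A}^{(m)}_{1|23} = 1$ holds exactly when all terms with $k < m$ vanish, and a clean sufficient condition is that $|\psi\rangle$ be annihilated by $\hat{d}$, since $\hat{d}|\psi\rangle = 0$ forces $\langle(\hat{d}^\dagger)^j \hat{d}^j\rangle = \|\hat{d}^j|\psi\rangle\|^2 = 0$ for all $j \geqslant 1$. I would therefore verify $\hat{d}|\psi\rangle = 0$ directly. Writing $\hat{d} = \tfrac{1}{\sqrt{2}}\bigl[(\hat{x}_a + \hat{x}_b + \hat{x}_c) + (-\partial_a + \partial_b + \partial_c)\bigr]$ in the position representation ($\hat{p}_j = -i\partial_j$) and using $\partial_j \psi = -(A\vec{x})_j \psi$, the condition reduces to the algebraic identity $(A\vec{x})_a - (A\vec{x})_b - (A\vec{x})_c + (x_a + x_b + x_c) = 0$. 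For the given $A$ one computes $(A\vec{x})_a - (A\vec{x})_b - (A\vec{x})_c = (3-2-2)x_a + (2-2-1)x_b + (2-1-2)x_c = -(x_a + x_b + x_c)$, so the bracket vanishes identically and $\hat{d}|\psi\rangle = 0$. (One checks in passing that $A$ is positive definite with $\det A = 1$, so $|\psi\rangle$ is indeed a normalized pure Gaussian state.) The main obstacle is thus not any single hard computation but recognizing that the equality $\mathcal{A}^{(m)}_{1|23} = 1$ for all $m$ is governed by the single annihilation condition $\hat{d}|\psi\rangle = 0$, and that this condition holds precisely because the entries of $A$ are tuned so that its $a$-row minus its $b$- and $c$-rows equals $-(1,1,1)$.
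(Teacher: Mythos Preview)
Your proof is correct and follows essentially the same approach as the paper: the same choice $\hat{A}=\hat{a}$, $\hat{B}=\hat{b}+\hat{c}$ for \eqref{eq:A1-23-1}, the same commutator computation $[\hat{d},\hat{d}^\dagger]=1$ and application of the Lemma~\eqref{eq:XY} for \eqref{eq:A1-23-2}, and the same verification that $(\hat{a}^\dagger+\hat{b}+\hat{c})|\psi\rangle=0$ for the equality case. You have simply supplied the details of what the paper calls ``an easy exercise.''
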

\begin{proof}
Taking $\hat{A} = \hat{a}$ and $\hat{B} = \hat{b} + \hat{c}$ in Theorem \ref{thrm:AB}, we get the 
inequality \eqref{eq:A1-23-1}. On the other hand, the operators $\hat{a}^\dagger + \hat{c}$ and 
$\hat{a} + \hat{c}^\dagger$ commute with $\hat{b}$ and $\hat{b}^\dagger$, and with each other, so we 
have $[\hat{b} + \hat{a}^\dagger + \hat{c}, \hat{b}^\dagger + \hat{a} + \hat{c}^\dagger] = 1$, and 
the inequality \eqref{eq:A1-23-2} follows from Eq.~\eqref{eq:XY}. From the same Eq.~\eqref{eq:XY} we 
have
\begin{equation}\label{eq:mabc}
\begin{split}
    &(\hat{a}^\dagger + \hat{b} + \hat{c})^m (\hat{a} + \hat{b}^\dagger + \hat{c}^\dagger)^m =
     m! \\
    &+ \sum^m_{k=1} (m-k)! \binom{m}{k}^2
    (\hat{a} + \hat{b}^\dagger + \hat{c}^\dagger)^k (\hat{a}^\dagger + \hat{b} + \hat{c})^k.
\end{split}
\end{equation}
It is an easy exercise to verify that $(\hat{a}^\dagger + \hat{b} + \hat{c})|\psi\rangle = 0$, so 
that we have $\mathcal{A}^{(m)}_{1|23} = 1$ for all $m \geqslant 1$.
\end{proof}

It is clear that complete separability, which can be denoted as $1|2|3$ separability, is a stronger 
property than just $1|23$ partial separability, but the inequalities \eqref{eq:A1-23} do not allow 
us to distinguish between these two cases. To overcome this disadvantage, we introduce the 
quantities $\mathcal{A}^{(m)}_{2|13}$ and $\mathcal{A}^{(m)}_{3|12}$ that are obtained from 
$\mathcal{A}^{(m)}_{1|23}$ by an appropriate permutation of $\hat{a}$, $\hat{b}$, and $\hat{c}$. 
These quantities satisfy inequalities similar to inequality \eqref{eq:A1-23} for the corresponding 
kinds of partial separability. We then introduce the symmetric sum $S^{(m)}$ defined as
\begin{equation}
    S^{(m)}_3 = \frac{1}{3}(\mathcal{A}^{(m)}_{1|23} + \mathcal{A}^{(m)}_{2|13} +
    \mathcal{A}^{(m)}_{3|12}).
\end{equation}
Our main result is the following theorem about the properties of this sum.
\begin{thrm}
The sum $S^{(m)}_3$ satisfies the following inequality:
\begin{subnumcases}{\label{eq:Sm} S^{(m)}_3 \geqslant}
        2^m & \text{for fully separable states} \label{eq:Sm:1} \\
        \frac{2^m+2}{3} & \text{for biseparable states} \label{eq:Sm:2} \\
        1 & \text{for all states} \label{eq:Sm:3}
\end{subnumcases}
The condition $S^{(m)}_3 < (2^m+2)/3$ is thus sufficient for genuine tripartite entanglement. For 
any $m \geqslant 1$ there is no physical state with $S^{(m)}_3 = 1$, so the inequality 
\eqref{eq:Sm:3} is always the strict inequality $S^{(m)}_3 > 1$. However, there is a family of pure 
Gaussian states, parametrized with a real parameter $0 \leqslant \xi < 1$, for which $S^{(m)}_3(\xi) 
\to 1$ as $\xi \to 1$ for all $m \geqslant 1$. In this limit we then have a gap in $S^{(m)}_3$ 
values between biseparable and genuinely entangled states exponentially growing with $m$.
\end{thrm}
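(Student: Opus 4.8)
The plan is to reduce the three bounds of \eqref{eq:Sm} to the single–bipartition Proposition together with the linearity of $S^{(m)}_3$ in the state. Recalling that each term obeys $\mathcal{A}^{(m)}_{i|jk}\geqslant1$ for every state and $\mathcal{A}^{(m)}_{i|jk}\geqslant2^m$ for states separable across the cut $i|jk$, the bound \eqref{eq:Sm:3} is immediate and \eqref{eq:Sm:1} follows because a fully separable state is separable across all three cuts. The biseparable bound \eqref{eq:Sm:2} is the crucial one. I would first establish it for a state separable across a single cut, say $1|23$: there $\mathcal{A}^{(m)}_{1|23}\geqslant2^m$ while the other two terms are only guaranteed to be $\geqslant1$, giving $S^{(m)}_3\geqslant(2^m+2)/3$; the same holds for the $2|13$ and $3|12$ cuts by symmetry. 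Since a biseparable state is by definition a convex mixture of states each separable across one cut, and since every $\mathcal{A}^{(m)}_{i|jk}$ is linear in $\hat\varrho$, the bound passes to the mixture by convexity.

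Next I would prove that the value $S^{(m)}_3=1$ is never attained. Because each $\mathcal{A}^{(m)}_{i|jk}\geqslant1$, the equality $S^{(m)}_3=1$ forces all three to equal $1$ simultaneously. Using the expansion \eqref{eq:mabc}, the remainder terms are expectations of positive operators $((\hat a^\dagger+\hat b+\hat c)^k)^\dagger(\hat a^\dagger+\hat b+\hat c)^k$, so $\mathcal{A}^{(m)}_{1|23}=1$ forces the $k=1$ term $\langle X_1^\dagger X_1\rangle$ to vanish, hence $X_1=\hat a^\dagger+\hat b+\hat c$ to annihilate every vector in the support; likewise $X_2=\hat a+\hat b^\dagger+\hat c$ and $X_3=\hat a+\hat b+\hat c^\dagger$. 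Subtracting, $X_1-X_2\propto\hat p_a-\hat p_b$ and $X_1-X_3\propto\hat p_a-\hat p_c$ would then annihilate the support as well. But $\hat p_a-\hat p_b$ is, up to a symplectic transformation, a single–mode momentum operator with purely continuous spectrum and hence admits no normalizable zero–eigenvector. This contradiction shows no physical state reaches $S^{(m)}_3=1$, so \eqref{eq:Sm:3} is always strict.

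Finally I would exhibit the approaching family, whose form is dictated by the obstruction just found: it must become a simultaneous sharp eigenstate of the relative momenta $\hat p_a-\hat p_b$ and $\hat p_a-\hat p_c$, i.e.\ a permutation–symmetric pure Gaussian state in which these relative–momentum quadratures are squeezed, producible from single–mode squeezed vacua and a symmetric beam–splitter network. Parametrizing the squeezing by $\xi=\tanh r\in[0,1)$ (so $\xi\to1$ is the infinite–squeezing limit), I would compute the Gaussian second moments of the centered family and show $\langle X_i^\dagger X_i\rangle_\xi=\tfrac12(\mathrm{Var}\,\hat R+\mathrm{Var}\,\hat Q_i-1)\to0$ for each $i$, where $\hat R$, $\hat Q_i$ are the quadratures with $X_i=(\hat R+i\hat Q_i)/\sqrt2$. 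Since $[X_i,X_i^\dagger]=1$, the quantities $\langle(X_i^\dagger)^k X_i^k\rangle$ are factorial moments of the number operator of the Bogoliubov mode $X_i$; for a zero–mean single–mode Gaussian state with $\bar n_i=\langle X_i^\dagger X_i\rangle\to0$, Wick's theorem together with $|\langle X_i^2\rangle|\leqslant\sqrt{\bar n_i(\bar n_i+1)}$ forces every such moment to vanish. Hence each $\mathcal{A}^{(m)}_{i|jk}(\xi)\to1$ and $S^{(m)}_3(\xi)\to1$ for every fixed $m$, while the biseparability floor stays at $(2^m+2)/3$, so the gap grows like $2^m/3$.

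The main obstacle I anticipate is this last step: constructing a single symmetric family that drives all three $\mathcal{A}^{(m)}_{i|jk}$ to their common floor at once (the modes $X_1,X_2,X_3$ are not independent, their commutator matrix $[X_i,X_j^\dagger]$ being the rank–one all–ones matrix), and controlling the whole tower of higher factorial moments uniformly rather than only the variances. Verifying that the chosen covariance matrix is that of a legitimate pure state for every $\xi<1$ while its relative–position variances diverge as $\xi\to1$ --- matching the non-normalizable limit isolated in the second step --- is where the real care is needed.
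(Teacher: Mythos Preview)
Your derivation of the three bounds \eqref{eq:Sm:1}--\eqref{eq:Sm:3} and of the strictness $S^{(m)}_3>1$ follows the paper almost verbatim: the same reduction to Proposition~1, the same convexity argument for biseparable mixtures, and the same observation that $S^{(m)}_3=1$ forces each $X_i=\hat a^\dagger+\hat b+\hat c$ (and its permutations) to annihilate the state. Your final contradiction is phrased spectrally---$X_1-X_2\propto\hat p_a-\hat p_b$ has purely continuous spectrum---whereas the paper derives $\hat X^\dagger|\psi_0\rangle=\hat X|\psi_0\rangle$ for $\hat X=\hat a-\hat b$ and extracts the numerical contradiction $\langle\hat X^2\rangle-\langle\hat X^{\dagger2}\rangle=2$; both are short and equivalent in strength.

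The genuine divergence is in the tightness step. The paper fixes the explicit family $\psi_\xi(\vec x)\propto\exp(-\tfrac12\vec x^{\mathrm T}A_\xi\vec x)$ with $A_\xi=\left(\begin{smallmatrix}1&\xi&\xi\\\xi&1&\xi\\\xi&\xi&1\end{smallmatrix}\right)$ and computes a closed-form generating function $G_\xi(t)=\sum_m S^{(m)}_3 t^m=[(1-\alpha t)(1-\beta t)]^{-1/2}$ via Hermite polynomials and the matrix determinant lemma; the limit $\xi\to1$ gives $G_1(t)=1/(1-t)$ and hence $S^{(m)}_3\to1$ for every $m$ at once. Your route---reduce each $\mathcal A^{(m)}_{i|jk}$ to the factorial moments of the single Bogoliubov mode $X_i$, then use Wick's theorem together with $|\langle X_i^2\rangle|^2\leqslant\bar n_i(\bar n_i+1)$ to kill the whole tower once $\bar n_i\to0$---is correct and more conceptual: it shows that \emph{any} zero-mean Gaussian family with $\langle X_i^\dagger X_i\rangle\to0$ for all three $i$ will do, without ever writing a generating function. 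What you lose is explicitness: you still owe the construction of one such family (the paper's $A_\xi$ is exactly the symmetric relative-momentum--squeezed state your heuristic points to) and the verification that $\bar n_i(\xi)\to0$, which you correctly flag as the remaining work. The paper's method, by contrast, delivers the exact $S^{(m)}_3(\xi)$ for all $m$ and $\xi$, which it later reuses to exhibit exponential violation at a finite physical value of the squeezing.
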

\begin{proof}
The inequalities \eqref{eq:Sm:1} and \eqref{eq:Sm:3} immediately follow from the inequalities 
\eqref{eq:A1-23-1} and \eqref{eq:A1-23-2}, respectively. To prove the inequality \eqref{eq:Sm:2}, 
recall that a state is biseparable if it is a mixture of $1|23$, $2|13$ and $3|12$ partially 
separable states. For a $1|23$ partially separable state the first term in the definition of 
$S^{(m)}_3$ is larger than or equal to $2^m$ and the other two terms are bounded by $1$ from below, 
so the inequality \eqref{eq:Sm:2} is valid in this case. A similar argument shows that it  also 
holds for all $2|13$ and $3|12$ separable states. But then it is also valid for all their mixtures, 
i.e., for all biseparable states.

If for some $m_0$ there is a pure state $|\psi_0\rangle$ with $S^{(m_0)}_3 = 1$, then we must have 
$\mathcal{A}^{(m_0)}_{1|23} = \mathcal{A}^{(m_0)}_{2|13} = \mathcal{A}^{(m_0)}_{3|12} = 1$. From 
Eq.\eqref{eq:mabc} it then follows that $|\psi_0\rangle$ must be an eigenstate of $\hat{a}^\dagger + 
\hat{b} + \hat{c}$, $\hat{a} + \hat{b}^\dagger + \hat{c}$ and $\hat{a} + \hat{b} + \hat{c}^\dagger$ 
with eigenvalue zero. We show that there is no such common eigenstate even for any two of these 
three operators. If, on the contrary, $|\psi_0\rangle$ is a zero-eigenstate of $\hat{a}^\dagger + 
\hat{b} + \hat{c}$ and $\hat{a} + \hat{b}^\dagger + \hat{c}$, then 
\begin{equation}
(\hat{a}^\dagger + \hat{b} + \hat{c}) |\psi_0\rangle = (\hat{a} + \hat{b}^\dagger + \hat{c}) 
|\psi_0\rangle = 0 
\end{equation}
and, thus, we have $(\hat{a}^\dagger - \hat{b}^\dagger) |\psi_0\rangle = (\hat{a} - \hat{b}) 
|\psi_0\rangle$. In other words, we have $\hat{X}^\dagger |\psi_0\rangle = \hat{X} |\psi_0\rangle$, 
where $\hat{X} = \hat{a} - \hat{b}$ and $[\hat{X}, \hat{X}^\dagger] = 2$. From this we get the 
equality $\langle \psi_0 | \hat{X}^2 |\psi_0\rangle - \langle \psi_0 | \hat{X}^{\dagger 2} 
|\psi_0\rangle = 2$, which is contradictory to the fact that the left-hand side is purely imaginary 
or zero. Hence we have just proved that the equality $S^{(m_0)}_3 = 1$ is impossible and thus the 
strict inequality $S^{(m_0)}_3 > 1$ is always valid for all pure states. But due to linearity of 
$S^{(m_0)}_3$ this strict inequality is valid for all, including mixed, tripartite quantum states.

To show that the inequalities $S^{(m)}_3 > 1$, $m \geqslant 1$ are tight we must find a family of
states for which $S^{(m)}_3$ goes arbitrarily close to $1$. Such a family is given by the pure
Gaussian states $|\psi_\xi\rangle$ with wave function 
\begin{equation}
    \psi_\xi(\vec{x}) = \pi^{-3/4} \sqrt[4]{\det A_\xi} \exp\left(-\frac{1}{2}\vec{x}^{\mathrm{T}} 
A_\xi \vec{x}\right), 
\end{equation}
where the matrix $A_\xi$ is defined as 
\begin{equation}
    A_\xi = 
    \begin{pmatrix} 
        1 & \xi & \xi \\ 
        \xi & 1 & \xi \\ 
        \xi & \xi & 1
    \end{pmatrix}. 
\end{equation}
This matrix is positive definite for $-1/2 < \xi < 1$ and in Appendix B we show that 
$S^{(m)}_3(\xi) \to 1$ when $\xi \to 1$. 
\end{proof}

One can also give an example of a physical state for which the violation of the 
inequalities \eqref{eq:Sm} grows exponentially, though at a more moderate rate.
Consider the so-called GHZ/W states $|\varphi_a\rangle$, which are pure 
Gaussian states with the matrix 
\begin{equation}
    A_a = 
    \begin{pmatrix} 
        a & e_- & e_- \\ 
        e_- & a & e_- \\ 
        e_- & e_- & a 
    \end{pmatrix}, 
\end{equation}
where, according to Refs.~\cite{PhysRevA.73.032345, 
JModOpt.50.801, PhysRevLett.84.3482}, 
\begin{equation}
    e_\pm = \frac{a^2-1\pm\sqrt{(a^2-1)(9a^2-1)}}{4a}.
\end{equation}
This matrix is positive-definite for $a>1$. The choice of the off-diagonal elements $e_-$ is 
motivated by the desire to have the diagonal elements of the inverse matrix 
\begin{equation}
    A^{-1}_a = 
    \begin{pmatrix} 
        a & e_+ & e_+ \\ 
        e_+ & a & e_+ \\ 
        e_+ & e_+ & a 
    \end{pmatrix}
\end{equation}
equal to the diagonal elements of $A_a$. The covariance matrix of $|\varphi_a\rangle$ is 
\begin{equation}
    \gamma_a = 
    \begin{pmatrix} 
        A^{-1}_a & 0 \\ 
        0 & A_a 
    \end{pmatrix}, 
\end{equation}
so such a state is symmetric not only with respect to its modes, but up to some extent also with 
respect to position and momentum operators. In Appendix C it is shown that for the state of 
this family with $a = 3/2$ the violation of our condition grows at least as $1.16^m$.

According to the classification of tripartite Gaussian states from Ref.~\cite{PhysRevA.64.052303} 
the states $|\psi_\xi\rangle$ and $|\varphi_a\rangle$ belong to class 1, the class of states which 
are not separable for any grouping of the parties.  A tripartite Gaussian state belongs to class 1 
if the partially transposed matrices no longer satisfy the physicality condition, i.e., if 
$\Lambda_1 \gamma \Lambda_1 \not\geqslant iJ$, $\Lambda_2 \gamma \Lambda_2 \not\geqslant iJ$ and 
$\Lambda_3 \gamma \Lambda_3 \not\geqslant iJ$, where $\Lambda_1 = \diag(1, 1, 1, -1, 1, 1)$, 
$\Lambda_2 = \diag(1, 1, 1, 1, -1, 1)$, $\Lambda_3 = \diag(1, 1, 1, 1, 1, -1)$ and 
\begin{equation}
    J = 
    \begin{pmatrix} 
        0 & -\mathbb{1} \\ 
        \mathbb{1} & 0 
    \end{pmatrix}. 
\end{equation}
Since the state $|\psi_\xi\rangle$ is symmetric, it is enough to test only one of the three 
conditions. We show that $\Lambda_1 \gamma \Lambda_1 \not\geqslant iJ$. The covariance matrix 
$\gamma_\xi$ of the state $|\psi_\xi\rangle$ reads as 
\begin{equation}
    \gamma_\xi = 
    \begin{pmatrix} 
        A^{-1}_\xi & 0 \\ 
        0 & A_\xi 
    \end{pmatrix}. 
\end{equation}
The leading principal minor of the matrix $\Lambda_1 \gamma_\xi \Lambda_1 - iJ$ of size 5 is equal 
to $-4\xi^2/(1-\xi)^2(1+2\xi) < 0$ and thus the condition $\Lambda_1 \gamma \Lambda_1 \not\geqslant 
iJ$ is fulfilled. The same minor of the original matrix $\gamma_\xi - iJ$ equals to zero, so there 
is no violation of the physicality condition, as it must be. The principal minor of $\Lambda_1 
\gamma_a \Lambda_1 - iJ$ for $a=1.5$ obtained by removing the second row and column is equal to 
$-1.185 < 0$ and thus $|\varphi_a\rangle$ also belongs to the class 1. In fact, in the case of a 
pure state, one could verify its non-separability directly from its wave function.

\section{Conditions in terms of position and momentum operators}

The lowest-order quantity $S^{(1)}_3$ can be easily expressed in terms of the position and momentum
operators: $S^{(1)}_3 = (1/6) T^{(1)}_3 + (1/2)$, where
\begin{equation}\label{eq:T}
\begin{split}
    T^{(1)}_3 &= 3\langle(\hat{x}_a + \hat{x}_b + \hat{x}_c)^2\rangle +
    \langle(-\hat{p}_a + \hat{p}_b + \hat{p}_c)^2\rangle \\
    &+ \langle(\hat{p}_a - \hat{p}_b + \hat{p}_c)^2\rangle +
    \langle(\hat{p}_a + \hat{p}_b - \hat{p}_c)^2\rangle.
\end{split}
\end{equation}
The inequalities \eqref{eq:Sm} for $m=1$ can be translated to the following inequalities for
$T^{(1)}_3$:
\begin{subnumcases}{\label{eq:Sxp} T^{(1)}_3 \geqslant}
        9 & \text{for fully separable states} \label{eq:Sxp:1} \\
        5 & \text{for biseparable states} \label{eq:Sxp:2} \\
        3 & \text{for all states} \label{eq:Sxp:3}
\end{subnumcases}
Note that the strict inequality $T^{(1)}_3 > 3$ is valid for any quantum state since $S^{(1)}_3 > 1$ 
holds true. We now derive the inequalities \eqref{eq:Sxp} directly from the uncertainty relation.

The left-hand side of these inequalities is the sum of the three quantities 
$\langle(\hat{x}_a + \hat{x}_b + \hat{x}_c)^2 + (-\hat{p}_a + \hat{p}_b + \hat{p}_c)^2\rangle$, 
$\langle(\hat{x}_a + \hat{x}_b + \hat{x}_c)^2 + (\hat{p}_a - \hat{p}_b + \hat{p}_c)^2\rangle$ and 
$\langle(\hat{x}_a + \hat{x}_b + \hat{x}_c)^2 + (\hat{p}_a + \hat{p}_b - \hat{p}_c)^2\rangle$. Each 
quantity can be estimated with the inequality $\langle\hat{X}^2 + \hat{Y}^2\rangle \geqslant 
|\langle[\hat{X}, \hat{Y}]\rangle|$, which is valid for arbitrary observables $\hat{X}$ and 
$\hat{Y}$. This inequality follows from the positivity of $\langle(\hat{X} \pm i\hat{Y})(\hat{X} 
\mp i\hat{Y})\rangle \geqslant 0$. From this we immediately obtain the inequality 
\begin{equation}
    \langle(\hat{x}_a + \hat{x}_b + \hat{x}_c)^2\rangle + \langle(-\hat{p}_a + \hat{p}_b + 
    \hat{p}_c)^2\rangle \geqslant 1, 
\end{equation}
and similar inequalities for the other two quantities, so \eqref{eq:Sxp:3} follows. 

If a state has some separability properties then we can obtain a better estimation. Note that it is 
enough to prove inequalities \eqref{eq:Sxp:1} and \eqref{eq:Sxp:2} only for the corresponding 
factorizable states and then validity of these inequalities for general separable states will follow 
by linearity of the average. For a $1|23$-factorizable state we have 
\begin{equation}
\begin{split}
    \langle(\hat{x}_a + \hat{x}_b + \hat{x}_c)^2\rangle &= \langle (\Delta \hat{x}_a)^2 + (\Delta 
\hat{x}_b + \Delta \hat{x}_c)^2 \rangle \\
&+ \langle \hat{x}_a + \hat{x}_b + \hat{x}_c \rangle^2 \\
\langle(-\hat{p}_a + \hat{p}_b + \hat{p}_c)^2\rangle &= \langle (\Delta \hat{p}_a)^2 + (\Delta 
\hat{p}_b + \Delta \hat{p}_c)^2 \rangle \\
&+ \langle -\hat{p}_a + \hat{p}_b + \hat{p}_c \rangle^2.
\end{split}
\end{equation}
Since $\langle(\Delta \hat{x}_a)^2 + (\Delta \hat{p}_a)^2\rangle \geqslant 1$ and 
\begin{equation}
    \langle(\Delta \hat{x}_b + \Delta \hat{x}_c)^2 + (\Delta \hat{p}_b + \Delta \hat{p}_c)^2\rangle 
\geqslant 2, 
\end{equation}
we get 
\begin{equation}
    \langle(\hat{x}_a + \hat{x}_b + \hat{x}_c)^2\rangle + \langle(-\hat{p}_a + 
    \hat{p}_b + \hat{p}_c)^2\rangle \geqslant 3. 
\end{equation}
The other two quantities can be estimated either as we did in the previous paragraph or using this 
alternative approach. Both lead to the same bound, $1$, so we arrive at inequality 
\eqref{eq:Sxp:2}. 

For a completely factorizable state we have the identity 
\begin{equation}
\begin{split}
    \langle(\hat{x}_a + \hat{x}_b + \hat{x}_c)^2\rangle &= \langle (\Delta \hat{x}_a)^2 + (\Delta 
    \hat{x}_b)^2 + (\Delta \hat{x}_c)^2\rangle \\
    &+ \langle \hat{x}_a + \hat{x}_b + \hat{x}_c \rangle^2, 
\end{split}
\end{equation}
and similar identities 
for the combinations of $\hat{p}$'s. We thus have 
\begin{equation}
    \langle(\hat{x}_a + \hat{x}_b + \hat{x}_c)^2\rangle + \langle(-\hat{p}_a + \hat{p}_b + 
\hat{p}_c)^2\rangle \geqslant 3 
\end{equation}
and similar inequalities for the other two quantities, which establishes inequality 
\eqref{eq:Sxp:1}.

We now demonstrate that the state $|\psi_\xi\rangle$ in the limit $\xi \to 1$ perfectly violates 
the inequality \eqref{eq:Sxp:3}. One can easily see that 
\begin{equation}
    \hat{\mathbf{p}}|\psi_\xi\rangle = iA_\xi\mathbf{x}|\psi_\xi\rangle. 
\end{equation}
This means that the state $|\psi_\xi\rangle$ is a simultaneous 
eigenstate with zero eigenvalue of the non-Hermitian operators $\hat{p}_1 - i\hat{x}_1 - 
i\xi\hat{x}_2 - i\xi\hat{x}_3$, $\hat{p}_2 - i\xi\hat{x}_1 - i\hat{x}_2 - i\xi\hat{x}_3$, and 
$\hat{p}_3 - i\xi\hat{x}_1 - i\xi\hat{x}_2 - i\hat{x}_3$. In the limit $\xi \to 1$ the (unphysical) 
limiting state $|\psi_1\rangle$ is a simultaneous eigenstate with eigenvalue zero of the 
non-Hermitian operators $\hat{p}_k - i\hat{X}$, $k = 1, 2, 3$, $\hat{X} = \hat{x}_1 + \hat{x}_2 + 
\hat{x}_3$. One can show that for such an eigenstate one has $T^{(1)}_3 = 3$, i.e. such an 
eigenstate perfectly violates the inequality \eqref{eq:Sxp:3} and thus the inequality 
\eqref{eq:Sm:3} (which is impossible for any physical state). In fact, in the limit $\xi \to 1$ 
we have the relations 
\begin{equation}
    \hat{p}_1 |\psi_1\rangle = \hat{p}_2 |\psi_1\rangle = \hat{p}_3 |\psi_1\rangle = i\hat{X} 
    |\psi_1\rangle. 
\end{equation}
From these equalities we get $T^{(1)}_3 = 6\langle\hat{X}^2\rangle$. To compute the last average, 
note that 
\begin{equation}
    \hat{p}^2_1 |\psi_1\rangle = -\hat{X}^2 |\psi_1\rangle + |\psi_1\rangle. 
\end{equation}
In a similar way one obtains that $\langle\hat{p}_j \hat{p}_k\rangle = 
-\langle\hat{X}^2\rangle + 1$ for all $j, k = 1, 2, 3$. Then, on the one hand we have 
\begin{equation}
    \sum^3_{j,k=1} \langle\hat{p}_j \hat{p}_k\rangle =  
\langle(\hat{p}_1+\hat{p}_2+\hat{p}_3)^2\rangle = 9\langle\hat{X}^2\rangle. 
\end{equation}
On the other hand, we have $\sum^3_{j,k=1} \langle\hat{p}_j 
\hat{p}_k\rangle = -9\langle\hat{X}^2\rangle + 9$. Comparing both equalities, we obtain 
$\langle\hat{X}^2\rangle = 1/2$ and thus $T^{(1)}_3 = 3$.

\section{Experiment} 

As it is written, the left-hand side of the inequalities \eqref{eq:Sm} (or, equivalently, the 
quantity \eqref{eq:T} and its higher-order extensions) cannot be measured straightforwardly. In 
Eq.~\eqref{eq:T} the combinations of moments commute with each other, but do not commute with the 
combination of positions, so these four quantities cannot be measured simultaneously. We show now 
that with little modification of the inequalities \eqref{eq:Sm} it is possible to avoid this 
drawback.

Let us augment our tripartite system with an auxiliary fourth mode, $z$, which we will assume 
separable from the first three, i.e., we consider fourpartite states of the form $\hat{\varrho}' = 
\sum_i p_i \hat{\varrho}_{abc, i} \otimes \hat{\varrho}_{z, i}$. The quantity 
$\mathcal{A}^{(m)}_{1|23}$, defined by Eq.~\eqref{eq:A}, is generalized in the following way:
\begin{equation}
    \mathcal{A}^{\prime (m)}_{1|23} = \frac{1}{m!} \langle (\hat{a}^\dagger + \hat{b} + \hat{c} +
\hat{z}^\dagger)^m (\hat{a} + \hat{b}^\dagger + \hat{c}^\dagger + \hat{z})^m \rangle.
\end{equation}
Taking $\hat{A} = \hat{a} + \hat{z}$ and $\hat{B} = \hat{b}+\hat{c}$ in \eqref{eq:ABn} we obtain 
that if all $\hat{\varrho}_{abc, i}$ are $1|23$-separable then $\mathcal{A}^{\prime (m)}_{1|23} 
\geqslant 2^m$. On the other hand, if states $\hat{\varrho}_{abc, i}$ are arbitrary, we can take 
$\hat{A} = \hat{z}$ and $\hat{B} = \hat{a}^\dagger + \hat{b} + \hat{c}$ to conclude that 
$\mathcal{A}^{\prime (m)}_{1|23} \geqslant 1$ in this case. We see that $\mathcal{A}^{\prime 
(m)}_{1|23}$ satisfies the inequalities \eqref{eq:A1-23} under the assumption that the fourth mode 
is separable from the first three. Moreover, if $\hat{\varrho}' = \hat{\varrho} \otimes 
|0\rangle\langle 0|$, then $\mathcal{A}^{\prime (m)}_{1|23} = \mathcal{A}^{(m)}_{1|23}$, so if we 
can easily measure $\mathcal{A}^{\prime (m)}_{1|23}$ we can also easily measure 
$\mathcal{A}^{(m)}_{1|23}$.

From the inequalities for the augmented quantities $\mathcal{A}^{\prime (1)}_{1|23}$, 
$\mathcal{A}^{\prime (1)}_{2|13}$ and $\mathcal{A}^{\prime (1)}_{3|12}$ we can obtain an inequality 
in terms of the position and momentum operators similar to the inequality 
\eqref{eq:Sxp}. It reads as follows:
\begin{subnumcases}{\label{eq:T'} T^{\prime (1)}_3 - 3 \geqslant}
        9 & \text{for fully separable states} \label{eq:T':1} \\
        5 & \text{for biseparable states} \label{eq:T':2} \\
        3 & \text{for all states} \label{eq:T':3}
\end{subnumcases}
where $T^{\prime (1)}_3$ is defined by
\begin{widetext}
\begin{equation}\label{eq:T'2}
\begin{split}
    T^{\prime (1)}_3 &= 3\langle(\hat{x}_a + \hat{x}_b + \hat{x}_c + \hat{x}_z)^2\rangle +
    \langle(\hat{p}_a - \hat{p}_b - \hat{p}_c + \hat{p}_z)^2\rangle
    + \langle(- \hat{p}_a + \hat{p}_b - \hat{p}_c + \hat{p}_z)^2\rangle +
    \langle(- \hat{p}_a - \hat{p}_b + \hat{p}_c + \hat{p}_z)^2\rangle \\
    &= \langle (\hat{X}_{az} + \hat{X}_{bc})^2 + (\hat{P}_{az} - \hat{P}_{bc})^2
    + (\hat{X}_{bz} + \hat{X}_{ac})^2 + (\hat{P}_{bz} - \hat{P}_{ac})^2
    + (\hat{X}_{cz} + \hat{X}_{ab})^2 + (\hat{P}_{cz} - \hat{P}_{ab})^2\rangle,
\end{split}
\end{equation}
\end{widetext}
with $\hat{X}_{ij} = \hat{x}_i + \hat{x}_j$ and $\hat{P}_{ij} = \hat{p}_i + \hat{p}_j$, while the 
separability test is applied to the first three modes (since the fourth mode is always assumed to be 
separable from the rest). For a fully separable tripartite state the corresponding four-partite 
state is also fully separable and we can apply the result of Ref.~\cite{PhysRevLett.84.2722} (scaled 
by a factor of 2 since we have a sum of two operators) to get that $\langle (\hat{X}_{ij} + 
\hat{X}_{kl})^2 + (\hat{P}_{ij} - \hat{P}_{kl})^2 \rangle \geqslant 4$, $\{i, j, k, l\} = \{a, b, c, 
z\}$, and thus the inequality \eqref{eq:T':1} is obtained. If $\hat{X}_{ij}$ and $\hat{X}_{kl}$ were 
ordinary single-mode operators then $\langle (\hat{X}_{ij} + \hat{X}_{kl})^2 + (\hat{P}_{ij} - 
\hat{P}_{kl})^2 \rangle$ could be arbitrarily close to zero, but in our case each operator 
$\hat{X}_{ij}$ acts on two modes and the fact that one of the modes is always separable from the 
rest puts a lower bound on the values that this quantity can take. This restriction is expressed by 
$\langle (\hat{X}_{ij} + \hat{X}_{kl})^2 + (\hat{P}_{ij} - \hat{P}_{kl})^2 \rangle \geqslant 2$, so 
we arrive at the inequalities \eqref{eq:T':2} and \eqref{eq:T':3}.

The advantage of introducing a new mode is that all the quantities in Eq.~\eqref{eq:T'2} commute 
with each other and thus can be measured simultaneously. In Fig.~\ref{fig:meas} we present a scheme 
to measure $T^{\prime (1)}_3$, defined by Eq.~\eqref{eq:T'2}. It consists of four 50-50 beam 
splitters and two mirrors. The first three inputs are used for the tripartite state in question and 
the auxiliary fourth input is vacuum. On the output we get precisely the four combinations from 
Eq.~\eqref{eq:T'2}. The higher-order quantities can be measured with a generalization of the 
approach proposed in Ref.~\cite{PhysRevA.66.030301}.

\begin{figure}[ht]
\includegraphics{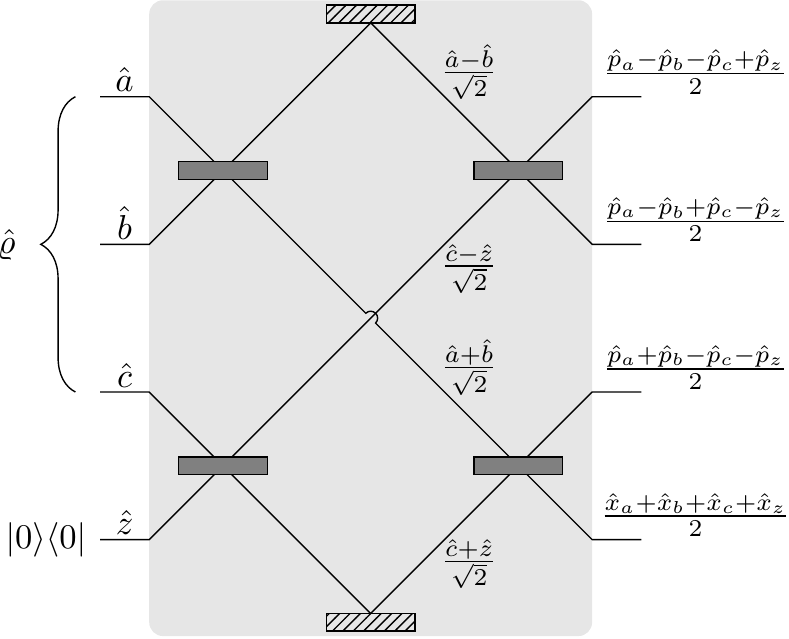}
\caption{Measurement scheme for $T^{\prime (1)}_3$, Eq.~\eqref{eq:T'2}. \label{fig:meas}}
\end{figure}

\section{Conclusion} 

In conclusion, we have derived a hierarchy of conditions for tripartite separability in terms of 
higher-order moments. Violation of these conditions is a clear signature of tripartite genuine 
entanglement. We have constructed tripartite Gaussian states for which the violation of the 
conditions grows exponentially with the order of the moments. Higher-order conditions of our 
hierarchy can potentially be useful to detect entanglement of non-Gaussian states as well. We also 
proposed a method to implement the measurements of our conditions in a straightforward way.

\appendix

\section{Moments of partially transposed state} 

We first prove the relation $\langle\hat{A}\hat{B}\rangle_{\mathrm{PT}} = 
\langle\hat{A}\hat{B}^\dagger\rangle$, which we use to prove Theorem 1. For a density operator
\begin{equation}
    \hat{\varrho} = \sum_{n,m,k,l} \varrho_{nmkl} |n\rangle_1 \langle k| \otimes |m\rangle_2 
\langle l|
\end{equation}
its partial transposition with respect to the second subsystem reads as
\begin{equation}
    \hat{\varrho}^{\mathrm{PT}} = \sum_{n,m,k,l} \varrho_{nmkl} |n\rangle_1 \langle k| \otimes 
|l\rangle_2 \langle m|.
\end{equation}
In these expressions the states $\{|n\rangle_1\}$ form a basis of the Hilbert space associated with 
the first subsystem and similarly $\{|m\rangle_2\}$ is a basis of the second subsystem. We then have
\begin{equation}
    \langle\hat{A}\hat{B}\rangle_{\mathrm{PT}} =  
    \sum_{n,m,k,l} \varrho_{nmkl} \langle k|\hat{A}|n\rangle \langle m|\hat{B}|l\rangle.
\end{equation}
On the other hand, we can write
\begin{equation}
\begin{split}
    \langle\hat{A}\hat{B}^\dagger\rangle &= \sum_{n,m,k,l} \varrho_{nmkl} \langle 
k|\hat{A}|n\rangle \langle l|\hat{B}^\dagger|m\rangle \\
    &= \sum_{n,m,k,l} \varrho_{nmkl} \langle k|\hat{A}|n\rangle \langle m|\hat{B}|l\rangle = 
\langle\hat{A}\hat{B}\rangle_{\mathrm{PT}}.
\end{split}
\end{equation}
Here we have used an additional assumption that the matrix elements $\langle m|\hat{B}|l\rangle$ 
are 
real in the basis $\{|m\rangle_2\}$. In the present work we deal with creation and annihilation 
operators in the Fock basis, so this assumption is valid in our case.

\section{Proof of Theorem 2} 

Here we prove that the inequalities $S^{(m)}_3 > 1$ are tight, i.e., one cannot 
improve the lower bound. Let us take an arbitrary real tripartite Gaussian state $|\psi\rangle$ 
symmetric with respect to its modes. The matrix $A$ of this state corresponds to
\begin{equation}
    A = 
    \begin{pmatrix}
        \mu & \nu & \nu \\
        \nu & \mu & \nu \\
        \nu & \nu & \mu
    \end{pmatrix},
\end{equation}
where $\mu$ and $\nu$ are some real numbers such that $A$ is positive-definite. Due to symmetry all 
three terms in the definition of $S^{(m)}_3$ are equal and we have
\begin{equation}
    S^{(m)}_3 = \frac{1}{m!} \langle (\hat{a}^\dagger + \hat{b} + \hat{c})^m 
                                     (\hat{a} + \hat{b}^\dagger + \hat{c}^\dagger)^m \rangle,
\end{equation}
so it is enough to compute only one of the terms. We define $\hat{Z} = \hat{a} + \hat{b}^\dagger + 
\hat{c}^\dagger$. To find $\hat{Z}^m |\psi\rangle$ we first compute the generating function $e^{t 
\hat{Z}} |\psi\rangle$ and then take the coefficient of $t^m$. In the position representation the 
operator $\hat{Z}$ becomes
\begin{equation}
\begin{split}
    \hat{Z} &= \frac{1}{\sqrt{2}}\left(x + y + z + \frac{\partial}{\partial x} - 
              \frac{\partial}{\partial y} - \frac{\partial}{\partial z}\right) \\
            &= \frac{1}{\sqrt{2}}(\vec{j}^\mathrm{T} \vec{x} + \vec{e}^\mathrm{T} \nabla),
\end{split}
\end{equation}
where we use the notation $\vec{j} = (1, 1, 1)$, $\vec{e} = (1, -1, -1)$ and $\nabla = 
(\partial/\partial x, \partial/\partial y, \partial/\partial z)$. The exponent of $\hat{Z}$ can be 
easily disentangled with help of the Baker-Campbell-Hausdorff formula \cite{*[][{ p. 525, 
Eq.~(11.3-4a)}] mandel-wolf}, and we have the following result:
\begin{equation}
    e^{t\hat{Z}} = e^{-t^2/4} e^{(t/\sqrt{2})\vec{j}^\mathrm{T} \vec{x}} 
                   e^{(t/\sqrt{2})\vec{e}^\mathrm{T} \nabla}.
\end{equation}
The Taylor expansion formula can be compactly written in the following symbolic form:
\begin{equation}
    e^{\vec{v}^\mathrm{T} \nabla} f(\vec{x}) = f(\vec{x} + \vec{v}),
\end{equation}
so that we can easily compute $\langle\vec{x}|e^{t \hat{Z}} |\psi\rangle$ as
\begin{equation}
\begin{split}
    \langle\vec{x}|e^{t \hat{Z}} |\psi\rangle &= e^{-\frac{t^2}{4} + 
    \frac{t}{\sqrt{2}}\vec{j}^\mathrm{T} \vec{x}}
    \psi\left(\vec{x} + \frac{t}{\sqrt{2}}\vec{e}\right) \\
    &= e^{-\frac{\lambda}{4}t^2 + \frac{t}{\sqrt{2}}\vec{u}^\mathrm{T} \vec{x}}
    \psi(\vec{x}),
\end{split}
\end{equation}
where $\lambda = 1+\vec{e}^\mathrm{T}A\vec{e}$ and $\vec{u} = \vec{j} - A\vec{e}$. From this we 
immediately get an explicit expression for $\hat{Z}^m |\psi\rangle$,
\begin{equation}
    \langle\vec{x}| \hat{Z}^m |\psi\rangle = \frac{\sqrt{\lambda^m}}{2^m} 
    H_m\left(\frac{\vec{u}^\mathrm{T} \vec{x}}{\sqrt{2\lambda}}\right) \psi(\vec{x}).
\end{equation}
We then have the following expression for our quantity:
\begin{equation}
\begin{split}
    S^{(m)}_3 &= \frac{1}{m!} \langle \hat{Z}^{\dagger m} \hat{Z}^m \rangle \\
    &= \frac{\lambda^m}{m! 2^{2m}} \int H^2_m\left(\frac{\vec{u}^\mathrm{T} 
    \vec{x}}{\sqrt{2\lambda}}\right) \psi^2(\vec{x}) \, d\vec{x}.
\end{split}
\end{equation}
To compute this integral, we introduce another generating function
\begin{equation}
    G(t) = \sum^{+\infty}_{m=0} S^{(m)}_3 t^m.
\end{equation}
With the help of the relation \cite{*[][{ p. 194, Eq.~(22)}] BE2-2}
\begin{equation}
    \sum^{+\infty}_{m=0} \frac{(s/2)^m}{m!} H^2_m(x) = \frac{1}{\sqrt{1-s^2}} e^{\frac{2s}{1+s}x^2}
\end{equation}
we can write $G(t)$ in the following form:
\begin{equation}
    G(t) = \frac{1}{\sqrt{1-\frac{\lambda^2}{4}t^2}} \sqrt{\frac{\det A}{\det A'}},
\end{equation}
where $A'$ is a rank-one update of the matrix $A$
\begin{equation}
    A' = A - \frac{1}{2} \frac{t}{1+\frac{\lambda}{2}t} \vec{u} \vec{u}^\mathrm{T}.
\end{equation}
For the determinant of this new matrix we have \cite{*[][{ p. 26, Eq.~(0.8.5.11)}] horn-johnson}
\begin{equation}
    \det A' = \det A \left(1-\frac{1}{2}\frac{t}{1+\frac{\lambda}{2}t} 
              \vec{u}^\mathrm{T} A^{-1} \vec{u} \right),
\end{equation}
and then we finally get an explicit expression for the generating function
\begin{equation}\label{eq:G}
    G(t) = \frac{1}{\sqrt{1-\frac{1+\vec{e}^\mathrm{T} A \vec{e}}{2}t}} 
    \frac{1}{\sqrt{1-\frac{1+\vec{j}^\mathrm{T} A^{-1} \vec{j}}{2}t}}.
\end{equation}

We can now prove the rest of Theorem 2. In the case of the state $|\psi_\xi\rangle$ we have
$A \equiv A_\xi =  
    \left(
    \begin{smallmatrix}
        1 & \xi & \xi \\
        \xi & 1 & \xi \\
        \xi & \xi & 1
\end{smallmatrix}\right)$,
and $1+\vec{e}^\mathrm{T} A \vec{e} = 4-2\xi$ and $1+\vec{j}^\mathrm{T} A^{-1} \vec{j} = 
\frac{4+2\xi}{1+2\xi}$, so that
\begin{equation}
    G_\xi(t) = \frac{1}{\sqrt{1-(2-\xi)t}} \frac{1}{\sqrt{1-\frac{2+\xi}{1+2\xi}t}}.
\end{equation}
In the limit $\xi \to 1$ we immediately obtain
\begin{equation}
    G_1(t) = \frac{1}{1-t} = 1 + t + t^2 + t^3 + \ldots,
\end{equation}
which means that $\lim_{\xi \to 1} S^{(m)}_3(\xi) = 1$. 

One can immediately see that our conditions can also be violated for mixed non-Gaussian states of 
the form $(1-p)|\psi_\xi\rangle\langle\psi_\xi| + p|\mathrm{vac}\rangle\langle\mathrm{vac}|$. The 
conditions $S^{(m)}_3 \geqslant (2^m+2)/3$ are violated for $p < 1/3$ in the limit $\xi \to 1$. 
This means that if the vacuum noise is not so strong ($p < 1/3$) then our conditions still work.

By analogy with $\mathcal{A}^{(m)}_{1|23}$ we can introduce a similar quantity 
$\tilde{\mathcal{A}}^{(m)}_{1|23} = \langle (-\hat{a}^\dagger + \hat{b} + \hat{c})^m (-\hat{a} + 
\hat{b}^\dagger + \hat{c}^\dagger)^m \rangle$ and the corresponding symmetric sum 
$\tilde{S}^{(m)}_3$. This new quantity also satisfies the inequalities \eqref{eq:Sm}. In terms of 
position and momentum operators $\tilde{T}^{(1)}_3$, the analogue of $T^{(1)}_3$, reads exactly as 
$T^{(1)}_3$, Eq.~\eqref{eq:T}, but with all position operators interchanged with their corresponding 
momentum operators. It can be shown that if a pure symmetric tripartite Gaussian state with a matrix 
$A$ violates the inequalities for $S^{(m)}_3$ for some $m$ then the state with the inverse matrix 
$A^{-1}$ violates the inequalities for $\tilde{S}^{(m)}_3$ for the same $m$. In fact, we have 
\begin{equation}
    \hat{\tilde{Z}} = -\hat{a} + \hat{b}^\dagger + \hat{c}^\dagger = 
    -\frac{1}{\sqrt{2}}(\vec{e}^\mathrm{T} \vec{x} + \vec{j}^\mathrm{T} \nabla),
\end{equation}
and for the generating function $\tilde{G}(t) = \sum^{+\infty}_{m=0} \tilde{S}^{(m)}_3 t^m$ we 
obtain the following expression:
\begin{equation}
    \tilde{G}(t) = \frac{1}{\sqrt{1-\frac{1+\vec{e}^\mathrm{T} A^{-1} 
\vec{e}}{2}t}} 
    \frac{1}{\sqrt{1-\frac{1+\vec{j}^\mathrm{T} A \vec{j}}{2}t}}.
\end{equation}
We see that $\tilde{G}(t)$ is the same as $G(t)$ with $A \to A^{-1}$. This is a general result 
valid 
for all symmetric tripartite pure Gaussian states with real wave function.

\section{Exponential violation by a physical state} 

For the generating function \eqref{eq:G} we have
\begin{equation}
    G(t) = \frac{1}{\sqrt{1 - \alpha t}} \frac{1}{\sqrt{1 - \beta t}},
\end{equation}
where 
\begin{equation}
    \alpha = \frac{27-\sqrt{385}}{8} \approx 0.92, \quad \beta = 
\frac{61-\sqrt{385}}{24} \approx 
    1.72.
\end{equation}
From this expression we obtain
\begin{equation}
\begin{split}
    &S^{(m)}_3 = \frac{1}{2^{2m}} \sum^{m}_{k=0} \binom{2k}{k} 
\binom{2(m-k)}{m-k} \alpha^k 
    \beta^{m-k} \\
    &= \frac{\beta^m}{2^{2m}} \sum^{m}_{k=0} \binom{2k}{k} \binom{2(m-k)}{m-k} 
    \left(\frac{\alpha}{\beta}\right)^k = \beta^m E_m(x),
\end{split}
\end{equation}
where $x = \alpha/\beta < 1$. We prove that the quantity
\begin{equation}
    E_m(x) = \frac{1}{2^{2m}} \sum^{m}_{k=0} \binom{2k}{k} \binom{2m-2k}{m-k} 
x^k
\end{equation}
monotonically decreases with $m$ for any fixed $x$ provided that $0 \leqslant x 
< 1$. 
Note that $E_m(1) = 1$. In fact, we have
\begin{equation}
    E_m(1) = [t^m] \frac{1}{\sqrt{1-t}} \frac{1}{\sqrt{1-t}} = [t^m] 
\frac{1}{1-t} = 1.
\end{equation}
Also note that $\binom{2n+2}{n+1} = (4 - 2/(n+1)) \binom{2n}{n}$ for $n \geqslant 0$. With these 
two facts established, we can now estimate $E_{m+1}(x)$ as $E_{m+1}(x) = E_m(x) - R_{m+1}(x)$, where
\begin{equation}
\begin{split}
    R_{m+1}(x) &= \frac{1}{2^{2m+2}} \left[2\sum^m_{k=0} 
\frac{\binom{2k}{k}}{k+1}  
    \binom{2(m-k)}{m-k} x^k \right.\\
    &- \left.\binom{2(m+1)}{m+1}x^{m+1}\right].
\end{split}
\end{equation}
Since $E_{m+1}(1) = E_m(1) = 1$, we get the equality
\begin{equation}
    2\sum^m_{k=0} \frac{1}{k+1} \binom{2k}{k} \binom{2(m-k)}{m-k} = 
    \binom{2(m+1)}{m+1}.
\end{equation}
According to our assumption, $0 \leqslant x < 1$, and thus we can estimate the first part of 
$R_{m+1}(x)$ as follows:
\begin{equation}
\begin{split}
    &2\sum^m_{k=0} \frac{1}{k+1} \binom{2k}{k} \binom{2(m-k)}{m-k} x^k \\
    &> 2\sum^m_{k=0} \frac{\binom{2k}{k}}{k+1} \binom{2(m-k)}{m-k} x^{m+1} = 
    \binom{2(m+1)}{m+1}x^{m+1}, \nonumber
\end{split}
\end{equation}
so that $R_{m+1}(x) > 0$ and thus $E_{m+1}(x) < E_m(x)$.

We now prove that the state $|\varphi_a\rangle$ with $a=1.5$ violates the inequalities for 
$S^{(m)}_3$ for all $m \geqslant 1$. We must prove that $\beta^m E_m(\alpha/\beta) < (2^m+2)/3$ for 
all $m$. For $m=1, 2, 3$ it is easy to prove this inequality by hand. For example, for $m=1$ it 
reads as $(71 - 2\sqrt{385})/24 < 4/3$, which can be equivalently transformed to the inequality $39 
< 2\sqrt{385}$ and finally to the valid inequality $1521 < 1540$. The other two cases can be 
established in a similar way, with integer arithmetic only. To prove the inequality for $m>3$ note 
that it is enough to prove a more strict inequality
\begin{equation}\label{eq:1}
    c \beta^m < \frac{2^m+2}{3}, \quad c = E_1\left(\frac{\alpha}{\beta}\right).
\end{equation}
If for some $m$ we show that $c(\beta-1)\beta^m < 2^m/3$ or, equivalently, that $3c(\beta-1) < 
(2/\beta)^m$, then the validity of the inequality \eqref{eq:1} for $m$ will imply its validity for 
$m+1$. It is possible to show that the inequality $3c(\beta-1) < (2/\beta)^m$ holds true for $m = 4$ 
and thus for all $m \geqslant 4$. We see that the violation grows as $(2/\beta)^m \approx 1.16^m$. 
This violation grows at a more moderate rate than that of the unphysical limit presented before, but 
it is attained for a real physical state.

\end{document}